\newtheorem{theorem}{Theorem}
\newtheorem{lemma}{Lemma}
\newtheorem{proposition}{Proposition}
\newtheorem{definition}{Definition}
\newtheorem{remark}{Remark}
\newcommand*{\cH}{\mathcal{H}}
\newcommand*{\cM}{\mathcal{M}}
\newcommand*{\cN}{\mathcal{N}}
\newcommand*{\mM}{\mathbb{M}}
\newcommand{\ket}[1]{|#1\rangle}
\newcommand{\bra}[1]{\langle #1 |}
\newcommand{\proj}[1]{\ket{#1}\bra{#1}}
\newcommand{\ot}[0]{\otimes}
\newcommand{\beq}{\begin{equation}}
\newcommand{\eeq}{\end{equation}}
\newcommand{\Tr}{{\rm Tr}}
\newcommand{\rhoAB}{\rho_{AB}}
\newcommand{\setft}[1]{\mathrm{#1}}
\newcommand{\density}[1]{\setft{D}\left(#1\right)}
\newcommand{\dsep}{\setft{D}_{\rm SEP}}
\newcommand{\dppt}{\setft{D}_{\rm PPT}}
\begin{document}

\title{Relative Entropy of Entanglement and Restricted Measurements}

\author{M. Piani}
\affiliation{Institute for Quantum Computing \& Department of Physics
  and Astronomy, University of Waterloo, 200 University Ave. W., N2L 3G1 Waterloo, Canada}


\begin{abstract}
 We introduce variants of relative entropy of entanglement based on the optimal distinguishability from unentangled states by means of restricted measurements. In this way, we are able to prove that the standard regularized entropy of entanglement is strictly positive for all multipartite entangled states. In particular, this implies that the asymptotic creation of a multipartite entangled state by means of local operations and classical communication always requires the consumption of a non-local resource at a strictly positive rate.
\end{abstract}


\maketitle

Entanglement is often considered the fundamental trait of quantum mechanics. Besides its conceptual relevance, it plays a crucial role in quantum information processing, as it lies, e.g., at the core of tasks like quantum teleportation~\cite{BennettBCJPW93} or dense coding~\cite{BennettW92}.
Given its usefulness, one of the major issues in entanglement theory~\cite{HHHH07} is its quantification. Many different measures of entanglement have been introduced in the years. They have proved to be
useful mathematical and conceptual tools, with links to many aspects of quantum information processing~\cite{HHHH07,PV07}.

While there is in principle an infinite number of such possible measures, some of them can be considered, for different reasons, worth special attention.
For example there are measures with operational meaning in the so-called distant-labs paradigm, also known as Local Operations and Classical Communication (LOCC) paradigm, where entanglement is elevated to the status of precious resource by imposing constraints to the kind of operations that spatially separated parties can perform. In the bipartite setting, this is the case for entanglement cost $E_c$ and distillable entanglement $E_d$. $E_c$ is the rate at which pre-
A second family is given by those measures that have some sort of geometrical interpretation, being based on the notion of distance
from the set of separable states~\cite{relent2}. As such they are especially useful, as they constitute the quantitative correspondent to a qualitative and intuitive reasoning based on the structure of the set of states. Among the last family, relative entropy of entanglement~\cite{relent1,relent2} stands out as an elegant and powerful tool in entanglement theory, being based on relative entropy, a fundamental quantity in quantum information theory~\cite{relentSW,relentV}.

The quantum relative entropy of a state $\rho$ with respect to a state $\sigma$ is defined as
$S(\rho||\sigma)\equiv\Tr \rho \log \rho - \Tr \rho \log \sigma$~\footnote{Relative entropy is infinite if the support of $\rho$ is not included in the support of $\sigma$.}.
It finds its operational interpretation in asymmetric quantum hypothesis testing. When given two hypothesis to test ---via measurements--- in the form of $n$ i.i.d. copies of either $\rho$ or $\sigma$, $S(\rho||\sigma)$ corresponds to the optimal rate of decay with $n$ of the probability of error of finding that the state was $\rho$ when it was actually $\sigma$~\footnote{This can be achieved when asking only that the converse error---finding that the state was $\sigma$ when it was actually $\rho$---be bounded.}. This is the content of Quantum Stein's Lemma~\cite{stein1,stein2}.

The relative entropy of entanglement of a bipartite state $\rhoAB$ is defined as $E_R(\rhoAB)\equiv\min_{\sigma_{AB}}S(\rhoAB\|\sigma_{AB})$, where the minimum runs over all separable states $\sigma_{AB}=\sum_ip_i\sigma^i_A\ot\tilde{\sigma}^i_B$. It is a faithful entanglement measure, in the sense that it is strictly positive if and only if $\rhoAB$ is entangled, i.e. non-separable. In many cases---for example when considering the relation of $E_c$ and $E_d$ with other suitable entanglement measures~\cite{liment}---one deals with the asymptotic regularization---henceforth called standard regularization---of a function on states $f$, defined by $f^{\infty}(\rho)\equiv\lim_n \frac{1}{n}f(\rho^{\ot n})$. Brand\~ao and Plenio recently provided a operational meaning to $E_R^\infty$, which is known to satisfy $E_c\geq E_R^\infty\geq E_d$: it is the rate of reversible manipulation of entanglement by means of class of (asymptotically) non-entangling operations~\cite{reverseent}, that is, $n$ copies of a state $\rho$ can be reversibly transformed into $\approx nE_R^\infty(\rho)/E_R^\infty(\sigma)$ copies of a state $\sigma$. It was known for a long time that $E_R$ can be strictly subadditive, i.e. there exist states $\rhoAB$ such that $E_R(\rhoAB^{\otimes 2})<2E_R(\rhoAB)$~\cite{subaddrelent}. The problem is that states $\sigma_{AA':BB'}$ that are separable in the $AA':BB'$ bipartite cut, and with which we compare states $\rhoAB\otimes\tau_{A'B'}$ to compute $E_R(\rhoAB\otimes\tau_{A'B'})$, may actually be correlated or even entangled in the cut $AB:A'B'$. As a consequence, $E_R^\infty\neq E_R$ in general and it was not known whether $E_R^\infty$ was faithful. Thus, it could be that, in the reversible theory of entanglement of Brand\~ao and Plenio, the asymptotic conversion of one entangled state $\rho$ to another entangled state $\sigma$ does not require a non-zero rate of consumption of the former state. Similarly, while in the bipartite setting it is know that entanglement cost is faithful~\cite{yang05}, in the multipartite setting it was not known  whether the asymptotic transformation via LOCC of $\rho$ to $\sigma$ always requires the consumption of copies of $\rho$ at a finite rate. As non-entangling operations are a larger class than LOCC, if $E_R^\infty$ is faithful then also for LOCC the rate must be strictly positive for all entangled states.

Here we derive a simple but powerful inequality satisfied by $E_R$, using the notion of restricted---e.g., to LOCC---measurements, and we are naturally led to define a new version of relative entropy of entanglement based on such restricted measurements. As a corollary, we prove that $E_R^\infty$ is faithful. Recently, a similar result as been independently established by Brand\~ao and Plenio, who obtained it as a non-trivial corollary of a generalization of Stein's Lemma~\cite{brandao09}. Our approach has the the advantage of simplicity and appears to be of wide applicability in entanglement theory and quantum information. 

In order to state our result we will need to establish some notation and some definitions. The set of positive operators of trace one---states---on a Hilbert space $\cH$ will be denoted $\density{\cH}$. The relative entropy of a probability distribution $(p_i)_i$ with respect to a probability distribution $(q_i)_i$ is defined as
$S((p_i)_i||(q_i)_i)\equiv \sum_i p_i \log\frac{p_i}{q_i}$~\footnote{It is infinite if $p_i>0$ for some $i$ such that $q_i=0$.}. Useful properties of quantum relative entropy are listed in the following proposition.
\begin{proposition}
\label{prop:relentr}
The quantum relative entropy satisfies:
\begin{enumerate}
\item $S(\rho\|\sigma)\geq 0$ and $S(\rho\|\sigma)=0\Leftrightarrow \rho=\sigma$;
\item $\sum_i p_i S(\rho_i\|\sigma_i)\geq S(\sum_i p_i\rho_i\|\sum_i p_i\sigma_i)$;
\item $S(\rho\|\sigma)\geq S(\Lambda(\rho)\|\Lambda(\sigma))$, for any completely positive trace-preserving map $\Lambda$.
\end{enumerate}
\end{proposition}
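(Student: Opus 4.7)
The plan is to establish the three properties in a specific logical order that mirrors their relative depth. Property 3 (monotonicity under CPTP maps, also called the data processing inequality or Lindblad-Uhlmann inequality) is the deepest fact, and both property 1 and property 2 follow from it in a fairly routine way. So the main work is to prove property 3, and then derive the other two as corollaries.

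For property 3, I would follow the standard route via Stinespring dilation plus monotonicity under the partial trace. Any CPTP map $\Lambda:\lin{\cH}\to\lin{\cH'}$ admits an isometry $V:\cH\to\cH'\ot\cK$ with $\Lambda(\rho)=\Tr_{\cK}(V\rho V^*)$; relative entropy is manifestly invariant under the isometric embedding $\rho\mapsto V\rho V^*$ (both terms of $\Tr\rho\log\rho-\Tr\rho\log\sigma$ are preserved), so it is enough to prove the inequality for the partial trace. Monotonicity under partial trace is then extracted from Lieb's concavity theorem---the joint concavity of $(A,B)\mapsto \Tr(K^*A^{1-t}KB^t)$ on positive operators---by differentiating at $t=0$, which is the classical Lindblad argument. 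This single deep operator inequality is the real content.

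Properties 1 and 2 then follow cheaply. For property 2 (joint convexity), introduce the classical--quantum extensions $\tilde\rho=\sum_i p_i\,\rho_i\ot\proj{i}$ and $\tilde\sigma=\sum_i p_i\,\sigma_i\ot\proj{i}$ on $\cH\ot\cH_{\rm aux}$ for an orthonormal basis $\{\ket{i}\}$. A direct spectral calculation, using that the blocks are orthogonal in the auxiliary register, gives $S(\tilde\rho\|\tilde\sigma)=\sum_i p_i\,S(\rho_i\|\sigma_i)$. Since the partial trace over $\cH_{\rm aux}$ is a CPTP map, property 3 yields
\begin{equation}
S\Bigl(\sum_i p_i\rho_i\,\Big\|\,\sum_i p_i\sigma_i\Bigr)\leq \sum_i p_i\,S(\rho_i\|\sigma_i),
\end{equation}
which is exactly joint convexity. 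For property 1, apply monotonicity to the trace map $\rho\mapsto\Tr\rho$ (a CPTP map into a one-dimensional system), obtaining $S(\rho\|\sigma)\geq S(1\|1)=0$. Clearly $\rho=\sigma$ gives zero; for the converse, I would invoke Pinsker's inequality $S(\rho\|\sigma)\geq \tfrac{1}{2\ln 2}\|\rho-\sigma\|_1^2$, or alternatively note that vanishing of the relative entropy forces matching spectral data through strict concavity of $x\mapsto x\log x$, either argument giving $\rho=\sigma$.

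The main obstacle is clearly property 3: it rests on a genuinely nontrivial operator inequality (Lieb concavity, equivalently operator convexity of $x\log x$), and there is no elementary scalar argument that reaches it. Once one is willing to invoke that result, the rest of the proposition is essentially bookkeeping using isometric invariance, block-diagonal extensions, and the trace map.
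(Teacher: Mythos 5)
Your proof is correct. Note, however, that the paper does not prove this proposition at all: it is stated as a list of standard, well-known properties of the quantum relative entropy (non-negativity/faithfulness, joint convexity, and the data-processing inequality), so there is no in-paper argument to compare against. Your derivation is the standard textbook route --- Stinespring dilation plus Lindblad's reduction of partial-trace monotonicity to Lieb's concavity theorem for property 3, the block-diagonal extension followed by a partial trace for property 2, and the trace map together with Pinsker (or Klein's inequality) for property 1 --- and all the steps check out. One small remark worth making: the identity $S(\tilde\rho\|\tilde\sigma)=\sum_i p_i S(\rho_i\|\sigma_i)$ that you use for joint convexity is exactly the paper's Lemma \ref{lem:basic} in the special case $r_k=s_k$, so your argument for property 2 dovetails naturally with machinery the paper already develops for the proof of its main theorem.
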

\begin{definition}
\label{def:measop}
A \emph{measurement operation} $\cM$ is associated to a POVM measurement $(M_i)_i$, $M_i\geq 0$, $\sum_i M_i=\openone$ via
\[
\mathcal{M}(X)=\sum_ip_i(X)\proj{i}, \quad p_i(X)=\Tr(M_iX)
\]
with $\{\ket{i}\}$ an orthonormal set.
\end{definition}
Notice that
\[
S(\rho\|\sigma)\geq S(\mathcal{M}(\rho)\|\mathcal{M}(\sigma))=S((p_i(\rho))_i\|(p_i(\sigma))_i),
\]
where the first inequality comes from property 3 of Proposition \ref{prop:relentr}, and the particular choice for the orthonormal set $\{\ket{i}\}$ in Definition \ref{def:measop} is irrelevant. In the following we will often use interchangeably the words ``measurement'', ``POVM'' and ``measurement map''.

A measurement may be arbitrary or be restricted to a particular class of measurements $\mM$, and we may indicate that by writing, with an abuse of notation, $\cM\in\mM$ for the corresponding measurement map. Following \cite{matthews08}, let us consider a multi-partite system with $n$ parties. The total Hilbert space is then $\cH=\cH_1\otimes\cdots\otimes\cH_n$, with $\cH_j$ a local Hilbert space of dimension $d_j$. For example, in such a setting one may consider the following classes of restricted measurements:
local measurements $\mM_{\rm LO}$ for which $M_i=M_{k_1}^{(1)}\ot\cdots\ot M_{k_n}^{(n)}$ with each $(M_{k_j}^{(j)})_{(k_j)}$ a POVM on $\cH_j$; LOCC measurements $\mM_{\rm LOCC}$ (of involved characterization); separable measurements $\mM_{\rm SEP}$ for which $M_i=\sum_{k}M_{i,k}^{(1)}\ot\cdots\ot M_{i,k}^{(n)}$, for $M_{i,k}^{(j)}\geq0$; positive under partial transposition (PPT) measurements $\mM_{\rm PPT}$ for which every $M_i$ is PPT with respect to every possible bipartition. It holds $\mM_{\rm LO}\subset\mM_{\rm LOCC}\subset\mM_{\rm SEP}\subset\mM_{\rm PPT}$.
\begin{definition}
The quantum relative entropy of $\rho\in\density{\cH}$ with respect to $\sigma\in\density{\cH}$ and a class of measurement operations $\mathbb{M}$, or ${\mathbb{M}}$-relative entropy of $\rho$ with respect to $\sigma$, is defined as
\beq
\mathbb{M}S(\rho||\sigma)\equiv\sup_{\mathcal{M}\in\mathbb{M}}S(\mathcal{M}(\rho)||\mathcal{M}(\sigma))
\eeq
\end{definition}
Because of property 3 of Proposition \ref{prop:relentr}, $\mathbb{M}S(\rho||\sigma)\leq S(\rho||\sigma)$, but if the measurements are unconstrained, it is known that $\lim_n \frac{1}{n}\mathbb{M}S(\rho^{\otimes n}||\sigma^{\otimes n})=S(\rho\|\sigma)$~\cite{stein1}.
\begin{remark}
If $\mathbb{M}$ contains informationally complete measurements~\cite{CFS02}, i.e. any measurement $\mathcal{M}$ such that $\mathcal{M}(\rho)=\mathcal{M}(\sigma)$ if an only if $\rho=\sigma$, then $\mathbb{M}S(\rho||\sigma)=0$ if and only if $\rho=\sigma$. The set $\mathbb{M}_{\rm LO}$ contains informationally complete measurements, and so do all the others $\mM=\mathbb{M}_{\rm LOCC},\mathbb{M}_{\rm{SEP}},\mM_{\rm PPT}$.
\end{remark}
\begin{definition}
Given $\rho\in\density{\cH}$ and a reference set $P\subset\density{\cH}$, the relative entropy of $\rho$ with respect to $P$ is defined as
\beq
E^P_R(\rho)\equiv\inf_{\sigma\in P}S(\rho||\sigma),
\eeq
and the ${\mathbb{M}}$-relative entropy of $\rho$ with respect to $P$ is defined as
\beq
\mathbb{M}E^P_R(\rho)\equiv\inf_{\sigma\in P}\mathbb{M}S(\rho||\sigma).
\eeq 
\end{definition}
Because of property 3 of Proposition \ref{prop:relentr}, $\mathbb{M}E^P_R(\rho)\leq E^P_R(\rho)$.
We will always consider reference sets $P$ which are convex and compact, like the subset of separable states $\dsep(\cH)=\{\sigma=\sum_ip_i\sigma_i^{(1)}\otimes\cdots\otimes\sigma_i^{(n)}\}$ or the subset $\dppt$ of states that are PPT with respect to every possible bipartition. Because of property 2 of Proposition \ref{prop:relentr}, in such a case there exist an optimal reference state $\sigma^*\in P$ (depending on $\rho$) such that $E^P_R(\rho)=S(\rho||\sigma^*)$ and a---potentially different---optimal reference state $\sigma_\mM^*\in P$ such that $\mathbb{M}E^P_R(\rho)=\mathbb{M}S(\rho||\sigma_\mM^*)$.
\begin{remark}
$E^P_R(\rho)=0$ if and only if $\rho\in P$. Moreover, if $\mathbb{M}$ contains informationally complete measurements, then also $\mathbb{M}E^P_R(\rho)=0$ if and only if $\rho\in P$. 
\end{remark}
With an abuse of notation, by $P$ we will from now on indicate a family of reference sets rather than a single reference set. For example, we may take $P$ to be the family of bipartite separable states, with local parties not having a definite dimension~\footnote{This is not in contradiction with the fact that for fixed dimensions of subsystems, $P$ is convex and compact.}. Thus, $\sigma_{AA'BB'}=\sum_ip_i \sigma_{AA'}^i\otimes\tilde{\sigma}_{BB'}^i$ is separable with respect to the bipartite cut $AA':BB'$, and its reduced state $\sigma_{AB}=\Tr_{A'B'}(\sigma_{AA'BB'})=\sum_ip_i \sigma_{A}^i\otimes\tilde{\sigma}_{B}^i$ is also separable, with respect to the $A:B$ cut, and we will say that they are both in $P$. If we denote $X=AB$ and $Y=A'B'$, we may write that both $\sigma_{XY}$ and $\sigma_X$ are in $P$.

We are now ready to state our main result. 
\begin{theorem}
\label{thm:main}
Consider two systems $X$ and $Y$ with joint Hilbert space $\cH_X\ot\cH_Y$, and a convex reference set $P$. Suppose that the set $\mathbb{M}$ of measurement operations on $X$ and the reference set $P$ are such that for all POVM elements $M_i$ associated to any measurement, and all $\sigma_{XY}\in P$, $\mathcal{M}\in\mathbb{M}$ on $X$, $\Tr_{X}(M_i^X\sigma_{XY})\in P$ (up to normalization). Further, suppose that $P$ is closed under partial trace, so that in particular $\sigma_X\in P$. Then, for any $\rho_{XY}\in\density{\cH_X\ot\cH_Y}$,
\beq
\label{eq:main}
E^P_R(\rho_{XY})\geq \mathbb{M}E^P_R(\rho_X)+E^P_R(\rho_Y),
\eeq
with $\rho_{X}=\Tr_Y(\rho_{XY})$, and $\rho_{Y}=\Tr_Y(\rho_{XY})$.
\end{theorem}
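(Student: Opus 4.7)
The plan is to fix an optimal reference state $\sigma^*_{XY}\in P$ achieving $E^P_R(\rho_{XY})=S(\rho_{XY}\|\sigma^*_{XY})$ (which exists because $P$ is convex and compact and relative entropy is jointly convex), then for an arbitrary $\mathcal{M}\in\mathbb{M}$ on $X$ apply the channel $\mathcal{M}\otimes\idmap_Y$ to both arguments and use monotonicity. The two hypotheses on $(\mathbb{M},P)$ are exactly what is needed to keep every relevant state inside $P$: closure of $P$ under partial trace gives $\sigma^*_X\in P$, and the ``measurement on $X$ maps $P$ into $P$'' assumption gives that the normalised post-measurement conditional states
\begin{equation*}
\sigma_Y^{*,i}\equiv\frac{\Tr_X(M_i\sigma^*_{XY})}{\Tr(M_i\sigma^*_X)}
\end{equation*}
all lie in $P$ (those with zero weight being irrelevant).

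Writing $\mathcal{M}\otimes\idmap_Y(\rho_{XY})=\sum_i p_i\proj{i}\otimes\rho_Y^i$ with $p_i=\Tr(M_i\rho_X)$, and similarly for $\sigma^*_{XY}$ with weights $q_i=\Tr(M_i\sigma^*_X)$, a direct computation -- expanding the logarithms of classical-quantum states with a common classical register -- produces the chain-rule identity
\begin{equation*}
S\bigl(\mathcal{M}(\rho_{XY})\|\mathcal{M}(\sigma^*_{XY})\bigr)=S\bigl(\mathcal{M}(\rho_X)\|\mathcal{M}(\sigma^*_X)\bigr)+\sum_i p_i\, S(\rho_Y^i\|\sigma_Y^{*,i}).
\end{equation*}
Monotonicity (property 3 of Proposition~\ref{prop:relentr}) then upper-bounds the left-hand side by $E^P_R(\rho_{XY})$.

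For the $Y$-term I invoke joint convexity (property 2): since $\sum_i p_i\rho_Y^i=\Tr_X(\rho_{XY})=\rho_Y$ and $\tau_Y\equiv\sum_i p_i\sigma_Y^{*,i}$ is a convex combination of elements of $P$ -- hence again in $P$ by convexity -- I get
\begin{equation*}
\sum_i p_i\, S(\rho_Y^i\|\sigma_Y^{*,i})\geq S(\rho_Y\|\tau_Y)\geq E^P_R(\rho_Y).
\end{equation*}
Combining this with the chain rule gives, for every $\mathcal{M}\in\mathbb{M}$, the bound $E^P_R(\rho_{XY})\geq S(\mathcal{M}(\rho_X)\|\mathcal{M}(\sigma^*_X))+E^P_R(\rho_Y)$. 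Taking the supremum over $\mathcal{M}\in\mathbb{M}$ of the $X$-term and using that $\sigma^*_X\in P$ yields $\mathbb{M}S(\rho_X\|\sigma^*_X)\geq\mathbb{M}E^P_R(\rho_X)$, which is (\ref{eq:main}).

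The one step that requires care is the joint-convexity application: what makes the argument work is that the \emph{same} weights $p_i$ -- determined by the measurement statistics on $\rho_X$, not on $\sigma^*_X$ -- multiply both $\rho_Y^i$ and $\sigma_Y^{*,i}$. This is precisely what guarantees that the $\rho$-side collapses back to $\rho_Y$ while the $\sigma$-side remains a convex mixture of elements of $P$; convexity of $P$ and the ``$\mathbb{M}$ stabilises $P$ on $X$'' hypothesis are both indispensable at this point, and everything else is routine bookkeeping of the two classical-quantum terms.
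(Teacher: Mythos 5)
Your proof is correct and follows essentially the same route as the paper's: monotonicity under the measurement channel on $X$, the chain-rule identity for classical-quantum states (the paper's Lemma~\ref{lem:basic}), joint convexity to recombine the conditional states on $Y$ into $\rho_Y$ and a convex mixture inside $P$, with the two hypotheses on $(\mathbb{M},P)$ invoked exactly where you invoke them. The only cosmetic difference is that you fix an optimal $\sigma^*_{XY}$ up front (which relies on the compactness of $P$ that the paper assumes throughout), whereas the paper keeps $\sigma_{XY}$ arbitrary and takes the infimum at the very end.
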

Before proving Theorem 1, let us observe that inequality \eqref{eq:main} implies by recursion that $E^P_R(\rho_X^{\ot n})\geq n \mM E^P_R(\rho_X)$, so that also
\[
(E^P_R)^\infty(\rho_X)\geq \mM E^P_R(\rho_X) 
\]
If $\mM$, besides satisfying the hypothesis of the theorem, contains informationally complete measurements, we find $(E^P_R)^\infty(\rho_X)>0$ for all $\rho\notin P$.






In order to prove the theorem we will need the following, easily checked observation.
\begin{lemma}
\label{lem:basic}
 Given two ensembles $\{(r_k,\rho_k)\}$ and $\{(s_k,\sigma_k)\}$, with $(r_k)_k$ and $(s_k)_k$ probability distributions, and an orthonormal basis $\{\ket{k}\}$, one has
\begin{multline}
S\Big(\sum_kr_k\rho_k\otimes\proj{k}\Big\|\sum_ks_k\sigma_k\otimes\proj{k}\Big)\\
=S((r_k)_k||(s_k)_k)+\sum_kr_kS(\rho_k||\sigma_k)
\end{multline}
\end{lemma}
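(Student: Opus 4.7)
The plan is to compute both sides directly from the definition $S(\rho\|\sigma)=\Tr[\rho\log\rho]-\Tr[\rho\log\sigma]$, exploiting the block-diagonal structure that the orthogonal flags $\{\proj{k}\}$ impose on the second factor. Let me write $R=\sum_k r_k\rho_k\otimes\proj{k}$ and $T=\sum_k s_k\sigma_k\otimes\proj{k}$. Because the $\proj{k}$ are mutually orthogonal, $R$ and $T$ are block-diagonal in the second subsystem; by the functional calculus applied block by block,
\begin{equation*}
\log R=\sum_k\log(r_k\rho_k)\otimes\proj{k},\qquad \log T=\sum_k\log(s_k\sigma_k)\otimes\proj{k},
\end{equation*}
and on the support of $\rho_k$ we have $\log(r_k\rho_k)=(\log r_k)\mathbb{I}+\log\rho_k$ (with the analogous statement for $T$).

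First I would compute the ``self'' term. Using orthogonality of the $\{\proj{k}\}$ to collapse the double sum,
\begin{equation*}
\Tr[R\log R]=\sum_k \Tr[r_k\rho_k\log(r_k\rho_k)]=\sum_k r_k\log r_k+\sum_k r_k\Tr[\rho_k\log\rho_k],
\end{equation*}
where $\Tr\rho_k=1$ was used. Next I would compute the ``cross'' term the same way; the only surviving contributions again come from matching indices, giving
\begin{equation*}
\Tr[R\log T]=\sum_k r_k\log s_k+\sum_k r_k\Tr[\rho_k\log\sigma_k].
\end{equation*}

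Subtracting and regrouping, the $\log r_k-\log s_k$ pieces combine into $\sum_k r_k\log(r_k/s_k)=S((r_k)_k\|(s_k)_k)$, while the operator pieces combine into $\sum_k r_k(\Tr[\rho_k\log\rho_k]-\Tr[\rho_k\log\sigma_k])=\sum_k r_k S(\rho_k\|\sigma_k)$, which is exactly the claimed identity.

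The only subtlety (and hence the ``hard part,'' though it is really just a bookkeeping issue) is handling singular supports: if some $s_k=0$ with $r_k>0$, or $\supp\rho_k\not\subseteq\supp\sigma_k$ for some $k$, then $\supp R\not\subseteq\supp T$ so the left-hand side is $+\infty$, and correspondingly either the classical term or one of the $S(\rho_k\|\sigma_k)$ on the right is $+\infty$; so the identity holds with the usual $+\infty$ convention. Outside this degenerate case the calculation above is unconditional.
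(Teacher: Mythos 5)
Your computation is correct and complete: the block-diagonal structure induced by the orthogonal flags $\proj{k}$ does reduce both trace terms to sums over matching indices, and the regrouping into the classical relative entropy plus the average quantum relative entropy is exactly right, including the treatment of the degenerate support cases. The paper offers no proof at all---it calls the lemma an ``easily checked observation''---so your direct calculation simply supplies the standard argument the author had in mind.
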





\begin{proof}[Proof of Theorem \ref{thm:main}]
Given two states $\rho_{XY}$ and $\sigma_{XY}\in P$, for all measurement maps $\mathcal{M}_X$ on $X$ we have
\begin{multline}
\label{eq:proofmain}
S(\rho_{XY}\|\sigma_{XY})\\
\begin{aligned}
	&\stackrel{(i)}{\geq} S(\mathcal{M}_X[\rho_{XY}]\|\mathcal{M}_X[\sigma_{XY}])\\
	&\stackrel{(ii)}{=}S(\sum_ip_i(\rho_X) \proj{i}\otimes\rho_Y^i \| \sum_ip_i(\sigma_X)\proj{i}\otimes \sigma_Y^i )\\
	&\stackrel{(iii)}{=}S((p_i(\rho_X))\|(p_i(\sigma_X)))+\sum_i p_i(\rho_X) S(\rho_Y^i\|\sigma_Y^i)\\
	&\stackrel{(iv)}{\geq} S(\cM_X(\rho_X)\|\cM_X(\sigma_X))\\
	&\qquad+S(\sum_i p_i(\rho_X)\rho_Y^i\|\sum_i p_i(\rho_X)\sigma_Y^i)\\
	&\stackrel{(v)}{=}S(\cM_X(\rho_X)\|\cM_X(\sigma_X)) + S(\rho_Y\|\sum_i p_i(\rho_X)\sigma_Y^i)\\
	&\stackrel{(vi)}{\geq}S(\cM_X(\rho_X)\|\cM_X(\sigma_X)) + E_R^P(\rho_Y)
\end{aligned}
\end{multline}
where we used: in (i), property 3 of Proposition \ref{prop:relentr}; in (ii), the definition of measurement map, with $p_i(\rho_X)=\Tr_X(M_i^X\rho_X)=\Tr_{XY}(M_i^X\otimes\openone_Y \rho_{XY})$ and the conditional states $\rho_Y^i=\Tr_X(M_i^X\otimes\openone_Y \rho_{XY})/p_i(\rho_X)$ (similarly for $\sigma$); in (iii), Lemma \ref{lem:basic}; in (iv), property 2 of Proposition \ref{prop:relentr}; in (v), that the measurement map is trace-preserving; in (vi), that by hypothesis $\sum_i p_i(\rho_X)\sigma_Y^i\in P$. As this is valid for any measurement, we obtain that for all $\sigma_{XY}$
\[
S(\rho_{XY}\|\sigma_{XY})\geq \mM S(\rho_X\|\sigma_X)+E_R^P(\rho_Y)
\]
By assumption, $\sigma_{X}\in P$, therefore,
\begin{align*}
E_R^P(\rho_{XY})&=\inf_{\sigma_{XY}\in P}S(\rho_{XY}\|\sigma_{XY})\\
		&\geq \inf_{\tilde{\sigma}_{X}\in P}\mM S(\rho_X\|\tilde{\sigma}_X)+E_R^P(\rho_Y)\\
		&= \mM E_R^P(\rho_{XY}) + E_R^P(\rho_Y)
\end{align*}
\end{proof}


It is straightforward to check that Theorem \ref{thm:main} applies in particular to any combination of the cases $P=\dsep,\dppt$ and $\mathbb{M}=\mathbb{M}_{\rm LO},\mathbb{M}_{\rm{LOCC}},\mathbb{M}_{\rm{SEP}}$. 
Further it applies to the case $P=\dppt$ and $\mM=\mM_{\rm PPT}$.

In order to obtain a more explicit lower bound, we observe that by Pinsker inequality
\[
S(\mathcal{M}(\rho)\|\mathcal{M}(\sigma))\geq \frac{1}{2\ln 2} (\|\mathcal{M}(\rho)-\mathcal{M}(\sigma)\|_1)^2
\]
with $\|A\|_1=\Tr\sqrt{A^\dag A}$ the trace norm.
According to \cite{matthews08},
\[
\sup_{\mathcal{M}\in\mathbb{M}_{\rm SEP}}\|\mathcal{M}(\rho)-\mathcal{M}(\sigma)\|_1\geq \frac{2}{2^{n/2}}\frac{1}{\sqrt{D}}\|\rho-\sigma\|_1
\]
with $n$ the number of parties and $D$ the total dimension $D=d_1d_2\ldots d_n$, thus we have for example
\[
\mathbb{M}_{\rm SEP}E^P_R(\rho)\geq \frac{1}{2^{n-1}D\ln2}(\inf_{\sigma\in P}\|\rho-\sigma\|_1)^2.
\]

We would like to remark that the result of Theorem \ref{thm:main} leads to interesting results other than the faithfulness of $E_R^\infty$. In \cite{condent} it was shown that from any entanglement measure $E$ for $n$-party entanglement, one can define a new one by means of conditioning, as:
\begin{multline*}
\label{eq:condent}
CE(\rho_{A_1A_2\ldots A_n})\equiv\inf_{\sigma}[E(\sigma_{A_1A'_1A_2A'_2\ldots A_nA'_n})\\
	-E(\sigma_{A'_1A'_2\ldots A'_n})]
\end{multline*}
where $A'_i$ are local ancillas of the systems $A_i$, and the infimum is over extensions $\sigma_{A_1A'_1A_2A'_2\ldots A_nA'_n}$ satisfying $\sigma_{A_1A_2\ldots A_n}=\rho_{A_1A_2\ldots A_n}$. One checks that $CE$ is naturally superadditive, i.e. $CE(\sigma_{A_1A'_1A_2A'_2\ldots A_nA'_n})\geq CE(\sigma_{A_1A_2\ldots A_n})+CE(\sigma_{A'_1A'_2\ldots A'_n})$.
Another concept recently developed~\cite{PCMH09} is that of broadcast---as opposed to ``standard''---regularization of a function $f$ on states. For a state $\rho\equiv\rho_X$ it is defined as $f^\infty_b(\rho)=\lim_m\frac{1}{m}\min_{\rho^{(m)}}f(\rho^{(m)})$, with $\rho^{(m)}\equiv\rho^{(m)}_{X^m}$, $X^m\equiv X_1\ldots X_m$, a $m$-copy broadcast state of $\rho$, i.e. $\rho^{(M)}_{X_k}\equiv {\rm Tr}_{X_1 \cdots X_{k-1} X_{k+1} \cdots X_m}\rho^{(m)}=\rho$ for all $k$. One readily verifies that
\beq
\label{eq:ineq}
(E^P_R)^\infty(\rho_{X})\geq (E^P_R)_b^\infty(\rho_{X})\geq CE^P_R(\rho_{X})\geq \mathbb{M}E^P_R(\rho_{X}),
\eeq
where: the first inequality is due to the fact that broadcast copies are a particular case of i.i.d. copies; the second inequality comes from the iterative use of the broadcasting condition and to the fact that any broadcast copy is a particular extension of the single copy state; the last inequality is due to Theorem \ref{thm:main}.

We further notice that for $\rho\equiv\rho_{A_1\ldots A_n}$, the multipartite mutual information $I(A_1:\ldots:A_n)_{\rho}\equiv S(\rho\|\rho_{A_1}\ot \cdots \ot\rho_{A_n})$ satisfies
\beq
\label{eq:mutual}
I(A_1:\ldots:A_n)_{\rho}\geq \min_{\sigma\in \dsep}S(\rho\|\sigma)= E^{\dsep}_R(\rho),
\eeq
because $\rho_{A_1}\ot \cdots \ot\rho_{A_n}$ is a particular separable state. Taking the broadcast regularization of the leftmost and rightmost terms of \eqref{eq:mutual}, we get $I_b^\infty\geq (E^{\dsep}_R)_b^\infty(\rho)$. Then, for $\mM=\mM_{\rm SEP}$ the inequalities \eqref{eq:ineq} provide a better lower bound $\mathbb{M}_{\rm SEP}E^{\dsep}_R$ for the asymptotic broadcast mutual information $I_b^\infty$ than the one exhibited in~\cite{PCMH09}. As argued in~\cite{PCMH09}, $I_b^\infty$ has many properties of and entanglement measure, and its strict positivity for all entangled states may be interpreted as a kind of monogamy of quantum correlations among the broadcast copies.


Finally, for suitable choices of $\mathbb{M}$ and $P$ we prove that $\mathbb{M}E^{P}_R$ is an entanglement measure itself. In particular this holds for $\mM=\mM_{\rm LOCC},\mM_{\rm SEP}$ and for $P=\dsep,\dppt$, on which we will focus for the sake of clarity and concreteness. The most striking feature of such generalizations of relative entropy of entanglement is that, while the latter is subadditive, they are superadditive entanglement measures. The proof of the following properties is presented in the Appendix, in particular the proof of superadditivity is similar to that of Theorem~\ref{thm:main}.
\begin{theorem}
\label{thm:entmeas}
For any combination of $\mM=\mM_{\rm LOCC},\mM_{\rm SEP}$ and $P=\dsep,\dppt$, $\mathbb{M}E^{P}_R$ is an entanglement measure which is:
(a) faithful; (b) convex; (c) strongly LOCC monotone: $\mathbb{M}E^{P}_R(\rho^{in})\geq \sum_i p^{out}_i\mathbb{M}E^{P}_R(\rho^{out}_i)$, with $\rho^{out}_i$ the possible outputs---each with probability $p^{out}_i$---of an LOCC operation on $\rho^{in}$; (d) strongly superadditive: $\mathbb{M}E^{P}_R(\rho_{XY})\geq \mathbb{M}E^{P}_R(\rho_{X})+\mathbb{M}E^{P}_R(\rho_{Y})$.
\end{theorem}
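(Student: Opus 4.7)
My plan is to establish properties (a)--(d) in turn. Faithfulness and convexity rely only on general facts about $\mathbb{M}E^P_R$, while strong LOCC monotonicity and strong superadditivity adapt the technique used in the proof of Theorem~\ref{thm:main}.

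For (a) \emph{faithfulness}, I invoke the second Remark after Definition~2: since $\mM_{\rm LOCC}$ and $\mM_{\rm SEP}$ both contain informationally complete measurements, $\mathbb{M}E^P_R(\rho)=0$ forces the optimal $\sigma^*_{\mathbb{M}}\in P$ to equal $\rho$, so $\rho\in P$, while the reverse direction is trivial. For (b) \emph{convexity}, given $\rho=\sum_k p_k\rho_k$ and optimal $\sigma^*_k\in P$ for each $\rho_k$, the mixture $\sigma=\sum_k p_k\sigma^*_k$ lies in $P$ by convexity; for any $\cM\in\mathbb{M}$, joint convexity of classical relative entropy (property~2 of Proposition~\ref{prop:relentr}) gives
\begin{equation*}
S(\cM(\rho)\|\cM(\sigma))\leq\sum_k p_k\mathbb{M}S(\rho_k\|\sigma^*_k)=\sum_k p_k\mathbb{M}E^P_R(\rho_k),
\end{equation*}
and since the right-hand side is $\cM$-independent, supremizing over $\cM$ and infimizing over $\sigma$ yields convexity.

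For (c) \emph{strong LOCC monotonicity}, the central idea is to dilate the given LOCC instrument into a single LOCC measurement. Write the instrument as $\{\Lambda_i\}$, with $p^{out}_i(\rho)=\Tr\Lambda_i(\rho)$, $\rho^{out}_i=\Lambda_i(\rho)/p^{out}_i(\rho)$, and likewise $\sigma^{out}_i$; since LOCC preserves $P\in\{\dsep,\dppt\}$, each $\sigma^{out}_i\in P$. For arbitrary per-outcome measurements $\cM_i\in\mathbb{M}$ with POVMs $(M_{j|i})_j$, I define the composite $\tilde{\cM}(X)=\sum_{i,j}\Tr(M_{j|i}\Lambda_i(X))\proj{ij}$, which lies in $\mathbb{M}$ since it is the LOCC instrument followed by a classically-conditioned LOCC (or SEP) measurement. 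Applying Lemma~\ref{lem:basic} to the block-diagonal classical output and dropping the non-negative outcome-distribution relative entropy yields, for any $\sigma\in P$,
\begin{align*}
\mathbb{M}S(\rho\|\sigma)&\geq S(\tilde{\cM}(\rho)\|\tilde{\cM}(\sigma))\\
&\geq \sum_i p^{out}_i(\rho)\,S(\cM_i(\rho^{out}_i)\|\cM_i(\sigma^{out}_i)).
\end{align*}
Because the $\cM_i$ can be chosen independently, their suprema can be taken termwise, giving $\mathbb{M}S(\rho\|\sigma)\geq\sum_i p^{out}_i(\rho)\mathbb{M}S(\rho^{out}_i\|\sigma^{out}_i)$; infimizing over $\sigma\in P$ and using $\sigma^{out}_i\in P$ delivers (c).

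For (d) \emph{strong superadditivity}, I mirror Theorem~\ref{thm:main} while also measuring $Y$. For $\sigma_{XY}\in P$ and $\cM_X,\cM_Y\in\mathbb{M}$ (whose tensor product is again in $\mathbb{M}$ for both LOCC and SEP classes), Lemma~\ref{lem:basic} applied to the output of $\cM_X\otimes\cM_Y$, followed by property~2 of Proposition~\ref{prop:relentr} on the conditional sum (as in steps (iii)--(v) of the Theorem~\ref{thm:main} proof), yields
\begin{align*}
\mathbb{M}S(\rho_{XY}\|\sigma_{XY})&\geq S(\cM_X(\rho_X)\|\cM_X(\sigma_X))\\
&\quad+S(\cM_Y(\rho_Y)\|\cM_Y(\tilde{\sigma}_Y)),
\end{align*}
with $\tilde{\sigma}_Y=\sum_i p_i(\rho_X)\sigma^i_Y\in P$ by the Theorem~\ref{thm:main} hypothesis on $P$. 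Since $\cM_X,\cM_Y$ occur in independent summands, their suprema can be taken separately, and infimizing over $\sigma_{XY}\in P$ (using $\sigma_X,\tilde{\sigma}_Y\in P$) gives (d). The main obstacle I anticipate is the bookkeeping in (c): verifying that the composite instrument-plus-per-branch-measurement genuinely sits in $\mathbb{M}$ so that the first inequality above is legitimate, and justifying that the supremum over the independent $\cM_i$ pulls inside the finite sum; both hold, but deserve explicit argument.
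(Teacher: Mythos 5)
Your proposal is correct. Parts (a), (b) and (d) follow essentially the same route as the paper: faithfulness via the remark on informationally complete measurements, convexity via the mixture of per-state optimal reference states together with joint convexity, and strong superadditivity via product measurements $\cM_X\ot\cN_Y$ combined with Lemma~\ref{lem:basic} and property~2 of Proposition~\ref{prop:relentr}. Where you genuinely diverge is part (c): the paper does not prove strong LOCC monotonicity directly, but instead invokes the criterion of Horodecki [Horo05], by which convexity plus invariance under local unitaries plus the [FLAGS] condition $\mM E^{P}_R(\sum_ip_i\rho_i\ot\proj{i}_{A'_k})=\sum_ip_i \mM E^{P}_R(\rho_i)$ together imply strong monotonicity; the bulk of the paper's argument is then devoted to proving [FLAGS] in both directions (restricting to controlled measurements for ``$\geq$'', and using that $(\bra{\psi}M_i\ket{\psi})_i$ is again a POVM in $\mM$ for ``$\leq$''). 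You instead dilate the LOCC instrument $\{\Lambda_i\}$ followed by per-branch measurements $\cM_i$ into a single composite measurement $\tilde\cM$ with POVM elements $\Lambda_i^\dagger(M_{j|i})$, check that $\tilde\cM\in\mM$ (true for both $\mM_{\rm LOCC}$ and $\mM_{\rm SEP}$, since LOCC instruments have product Kraus operators and conjugation by product operators preserves separability of POVM elements), and then apply Lemma~\ref{lem:basic} and discard the outcome-distribution term. Your route is more self-contained---it avoids the external reduction of [Horo05] and makes the monotonicity mechanism explicit---at the cost of the bookkeeping you correctly flag (membership of the composite in $\mM$, preservation of $P=\dsep,\dppt$ under the instrument branches, and termwise optimization over the independent $\cM_i$, all of which do hold). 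The paper's route is more modular and yields the [FLAGS] equality as a statement of independent interest, but both arguments are sound.
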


In conclusion, we have introduced new variants of relative entropy of entanglement based on the optimal distinguishability from unentangled states by means of restricted measurements. On the one hand, these variants, for a proper class of measurements, have themselves the full status of entanglement measures, and they are faithful, i.e., they vanish for and only for separable states. On the other hand, the original relative entropy of entanglement can be shown to satisfy a kind of superadditivity inequality involving the newly introduced quantities. Such a relation appears to be a powerful tool in entanglement theory. For example, it leads to a very simple and straightforward proof that asymptotic relative entropy of entanglement is faithful, both in the bipartite and multipartite setting. This implies that the asymptotic creation of a multipartite entangled state by means of local operations and classical communication always requires the consumption of a non-local resource at a strictly positive rate.

MP thanks J. Watrous for helpful discussions,
and acknowledges support from NSERC QuantumWorks and Ontario Centres of Excellence. 


%
%

\section*{Appendix}
\begin{proof}[Proof of Theorem~\ref{thm:entmeas}] In the following, infima and maxima are always understood to be over the chosen sets $P$ and $\mM$, if not otherwise specified.

(a) Faithfulness was already proved.

(b) Choose optimal ${\sigma^*_{\mM,i}}$'s for $\rho_i$'s. Then
\begin{multline}
\mM E^{P}_R(\sum_ip_i\rho_i)\\
\begin{aligned}
&\leq\sup_\cM S(\cM(\sum_ip_i\rho_i)\|\cM(\sum_ip_i{\sigma^*_{\mM,i}}))\\
&\stackrel{(i)}\leq\sup_\cM \sum_ip_iS\left(\cM(\rho_i)\|\cM(\sigma^*_{\mM,i})\right)\leq \sum_ip_i \mM E^{P}_R(\rho_i),
\end{aligned}
\end{multline}
where in (i) we have used linearity of $\cM$ and property 2 of Proposition~\ref{prop:relentr}.

(c) Having proved convexity, according to~\cite{Horo05}, it is sufficient to check the invariance of $\mathbb{M}E^{P}_R$ under local unitaries, and the [FLAGS] condition $\mathbb{M}E^{P}_R(\sum_ip_i\rho_i\ot\proj{i}_{A'_k})=\sum_ip_i \mM E^{P}_R(\rho_i)$, where $\{\ket{i}\}$ is an orthonormal basis for a local ancilla of party $A_k$, for all ensembles $\{(p_i,\rho_i)\}$ and all $k=1,\ldots,n$. Invariance under local unitaries derives immediately from $\mM$ and $P$ being closed under local unitaries, that is if $(M_i)_i$ is a POVM in $\mathbb{M}$ and $\sigma$ a state in $P$, then also $\left((\bigotimes_{k=1}^n U^{(k)})M_i(\bigotimes_{k=1}^n {U^{(k)}})^\dagger\right)_i\in \mM$ and $(\bigotimes_{k=1}^n U^{(k)})\sigma(\bigotimes_{k=1}^n {U^{(k)}}^\dagger)\in P$.

As regards the [FLAGS] condition, the direction ``$\leq$'' comes from convexity and from 
$\mathbb{M}E^{P}_R(\rho\ot\proj{\psi}_{A'_k})=\mathbb{M}E^{P}_R(\rho)$, for any pure state $\ket{\psi}$. The latter equality is easily checked:
\begin{multline*}
\mathbb{M}E^{P}_R(\rho\ot\proj{\psi}_{A'_k})\\
\begin{aligned}
&\leq\inf_{\sigma'}\sup_\cM S(\cM(\rho\otimes\proj{\psi}\|\cM(\sigma'\otimes\proj{\psi})\\
&\leq\inf_{\sigma'}\sup_\cM S((\Tr(\bra{\psi}M_i\ket{\psi}\rho))_i\|(\Tr(\bra{\psi}M_i\ket{\psi}\sigma'))_i)\\
&\leq\inf_{\sigma'}\sup_\cM'S(\cM'(\rho)\|\cM'(\sigma'))=\mathbb{M}E^{P}_R(\rho),
\end{aligned}
\end{multline*}
as $(\bra{\psi}M_i\ket{\psi})_i$ is a POVM in $\mM$ if $(M_i)_i$ is. On the other hand,
\begin{multline*}
\mathbb{M}E^{P}_R(\rho\ot\proj{\psi}_{A'_k})\\
\begin{aligned}
&\geq\inf_{\sigma}\sup_{(M_i\otimes\openone_{A'_k})_i} S((\Tr(M_i\rho))_i\|(\Tr(M_i\bra{\psi}\sigma\ket{\psi}))_i)\\
&\geq\inf_{\sigma}\sup_{\cM} S(\cM(\rho)\|\cM(\sigma))=\mathbb{M}E^{P}_R(\rho).
\end{aligned}
\end{multline*}

The direction ``$\geq$'' of [FLAGS] is proved by:
\begin{multline}
 \mM E^{P}_R(\sum_ip_i\rho_i\ot\proj{i})\\
\begin{aligned}
&\stackrel{(i)}{\geq} \inf_\sigma \sup_{\{\cM_i\}} S(\sum_ip_i\cM_i(\rho_i)\ot\proj{i}\|\sum_iq_i\cM_i(\sigma_i)\ot\proj{i})\\
&\stackrel{(ii)}{=} \inf_\sigma \sup_{\{\cM_i\}} [ S((p_i)_i \|(q_i)_i) + \sum_i p_i S(\cM_i(\rho_i)\|\cM_i(\sigma_i))]\\
&\stackrel{(iii)}{\geq} \sum_i p_i \inf_{\sigma_i} \sup_{\cM_i} S(\cM_i(\rho_i)\|\cM_i(\sigma_i)) = \sum_i p_i \mM E_R(\rho_i),
\end{aligned}
\end{multline}
where: in (i), we restricted the measurement $\cM$ to a ``controlled-measurement'' of the form $\sum_i \cM_i\otimes\proj{i}\cdot\proj{i}$, with $\cM_i$ measurement maps, so that $\sigma_i=\Tr_{A'_k}(\proj{i}_{A'_k}\sigma)/q_i$, with $q_i=\Tr(\proj{i}_{A'_k}\sigma)$; in (ii), we have used Lemma~\ref{lem:basic}; in (iii), we have discarded a positive contribution, and minimized independently every term of the convex combination.

(d) For every $\rho_{XY}$ and every $\sigma_{XY}\in P$, it holds
\begin{multline}
\mM S(\rho_{XY}\|\sigma_{XY})\\
\begin{aligned}
 &\stackrel{(i)}{\geq}\sup_{\cM_{X},\cN_{Y}}S(\cM_{X}\ot\cN_{Y}(\rho_{XY})\|\cM_{X}\ot\cN_{Y}(\sigma_{XY}))\\
 &\stackrel{(ii)}{=} \sup_{\cN_{Y}}\Big\{\sup_{\cM_{X}}\Big[S((p_i(\rho_X))_i\|(p_i(\sigma_X))_i)\\
 &\qquad\qquad\quad+\sum_ip_i(\rho_X)S(\cN_Y(\rho_Y^i)\|\cN_Y(\sigma_Y^i))\Big]\Big\}\\
 &\geq\sup_{\cM_{X}}\Big[S(\cM_{X}(\rho_X)\|\cM_{X}(\sigma_X))\\
 &\qquad\qquad\quad+\sup_{\cN_{Y}}S(\cN_Y(\rho_Y)\|\cN_Y(\sum_ip_i(\rho_X)\sigma_Y^i))\Big]\\
 &\geq  \inf_{\sigma_X}\sup_{\cM_{X}}S(\cM_{X}(\rho_X)\|\cM_{X}(\sigma_X))\\
 &+\inf_{\sigma_Y}\sup_{\cN_{Y}}S(\cN_Y(\rho_Y)\|\cN_Y(\sigma_Y))\\
&\geq \mM E_R(\rho_{X}) + \mM E_R(\rho_{X}).
 \end{aligned}
 \end{multline}
The steps are very similar to those of \eqref{eq:proofmain}. Inequality (i) is due to the fact that factorized measurements $\cM_{X}\ot\cN_{Y}$ may be suboptimal for the sake of $\mM S(\rho_{XY}\|\sigma_{XY})$. In (ii), $p_i(\tau_X)=\Tr(M_X^i\tau_X)$ and $\tau_Y^i=\Tr_X(M_X^i\tau_X)/p_i(\tau_X)$, for $\tau=\rho,\sigma$, with $(M_X^i)_i$ the POVM corresponding to $\cM_X$. The statement of the theorem is obtained by observing that the inequality is valid for any $\sigma_{XY}\in P$.
\end{proof}

\end{document}